\documentclass[preprint,12pt]{elsarticle}
\makeatletter
\def\ps@pprintTitle{%
	\let\@oddhead\@empty
	\let\@evenhead\@empty
	\let\@oddfoot\@empty
	\let\@evenfoot\@oddfoot
}
\makeatother
\usepackage{etoolbox}
\makeatletter
\usepackage{setspace}
\usepackage{bm}
\usepackage{graphics}
\usepackage{amsmath, amsthm, amssymb}
\usepackage{natbib}
\usepackage[colorlinks=true]{hyperref}
\usepackage{lscape}
\usepackage[utf8]{inputenc} 
\usepackage{amsthm}

\usepackage{enumerate}
\usepackage{mathrsfs}
\usepackage{placeins}
\usepackage{graphicx}
\usepackage{amsfonts}
\usepackage{verbatim}
\usepackage{amssymb}
\usepackage{amsthm}
\usepackage{float}
\usepackage{multirow}
\newtheorem{thm}{Theorem}[]

\newtheoremstyle{bolditalic}
  {3pt}   
  {3pt}   
  {\itshape} 
  {}      
  {\bfseries} 
  {.}     
  { }     
  {}      

\theoremstyle{bolditalic}
\newtheorem{proposition}{Proposition} 

\theoremstyle{definition}

\theoremstyle{remark}

\usepackage[mathscr]{euscript}
\usepackage{geometry} 
\usepackage{graphicx,lscape} 
\usepackage{booktabs} 
\usepackage{bigstrut}
\usepackage{array} 
\usepackage{paralist} 
\usepackage{verbatim} 
\usepackage{subfig} 

\biboptions{comma,round,authoryear}

\begin{document}
	
	\begin{frontmatter}

		\title{\textbf{How to detect a biased sample using the Rényi divergence measure$?$}}
        
		\author[stat]{Vaishnavi Pavithradas}
\author[stat]{Rajesh G.\corref{cor1}}
\cortext[cor1]{Corresponding author}
\ead{rajeshgstat@gmail.com}
\affiliation[stat]{organization={Department of Statistics},
            addressline={Cochin University of Science and Technology}, 
            city={Cochin},
         postcode={682022}, 
            state={Kerala},
            country={India}}

		\begin{abstract}
			Weighted distributions arise in situations where observations are selected with unequal probabilities or because of the non-observability of some events. Detecting such sampling bias is essential for ensuring valid statistical inference. In this paper, we propose a statistical test for detecting bias in a sample using the Rényi divergence measure. The proposed test statistic is formulated for both uncensored and type-I censored data and possesses several theoretical properties including asymptotic normality. Critical values are obtained through simulations under the Weibull distribution for a range of sample sizes and censoring proportions. A comprehensive power study compares the proposed test with the existing Kullback–Leibler divergence-based test and the likelihood ratio test. Two real datasets are analyzed to demonstrate the practical utility of the proposed test in detecting length bias.
			
		\end{abstract}
		\begin{keyword}
			Rényi divergence \sep Length-biased sample  \sep Weighted distribution \sep Likelihood ratio test
			\MSC[2010] 94A17 \sep 62G07
		\end{keyword}
		
	\end{frontmatter}
	
	\section{Introduction}
	\noindent
	Size-biased distributions represent a special case of the more general form known as weighted distributions. Originally introduced by \cite{fisher1934effect} to model for ascertainment bias, these distributions were later given a unified theoretical framework by \cite{rao1965discrete}. Weighted distributions commonly arise in practical situations where observations are selected with unequal probabilities, such as in probability proportional to size (PPS) sampling schemes, or because of the non-observability of some events. Briefly, let $X$ be a non-negative random variable with probability density function (pdf) $f(x;\theta)$, with unknown parameter $\theta$, then the corresponding weighted distribution is of the form,
    \begin{equation}\label{wpdf}
        f^w(x;\theta)= \frac{w(x)f(x;\theta)}{E[w(X)]}
    \end{equation}
    where $w(x)$ is a positive weight function and $E[w(X)]$ is the normalizing factor.\\
    A special case occurs when the weight function is expressed as $w(x)=x^r$, distributions of this type are referred to as size-biased distributions of order $r$, then the corresponding pdf is written as (\cite{patil1976size}, \cite{patil1984studies}):
    \begin{equation}\label{lbpdf}
        f_r(x;\theta)= \frac{x^rf(x;\theta)}{E[X^r]}
    \end{equation}
    The most frequently encountered size-biased distributions arise when $r=1$ and $r=2$ in sampling contexts; these cases are commonly referred to as length-biased and area-biased distributions, respectively.  For a review
of weighted distributions and their applications, see, for example, \cite{patil2002weighted}.\\
In practice, detecting biased samples is of considerable importance, since ignoring or failing to identify the presence of a biasing mechanism can lead to conclusions that differ substantially from what would be expected. Being unaware of the sampling mechanism typically results in inconsistent estimators. Consequently, the development of statistical tests to detect size bias is of major importance, and it is this problem that motivates the present study. Existing approaches to this problem include \cite{navarro2003detect} developed parametric
and nonparametric tests for several distributions making use of characterization
properties annd \cite{akman2007simple} provided a test for length-biased
sampling. Additionally, \cite{economou2014kullback}  provideed a Kullback–Leibler divergence–based tests concerning the biasness in  a sample.

Although the Kullback--Leibler divergence has proved useful for this purpose, it represents only a limiting case of the more general Rényi divergence. Introduced by \cite{renyi1961measures}, Rényi divergence forms a one-parameter family of divergence measures that includes the Kullback--Leibler divergence as the limiting case when the order parameter approaches one. The additional order parameter $\alpha$ provides greater flexibility by allowing the divergence to adjust its sensitivity to differences between distributions, thereby offering the potential to improve statistical performance under different sampling schemes. Despite its widespread applications in information theory, statistics, machine learning, and signal processing, the use of Rényi divergence for detecting biased samples has received little attention.\\
Motivated by these, the present paper extends the divergence-based testing framework of \cite{economou2014kullback} by replacing the Kullback--Leibler divergence with the more general Rényi divergence. An empirical Rényi divergence measure is developed for weighted distributions, leading to a new test statistic applicable to both complete and Type-I censored samples. Several theoretical properties of the proposed statistic, including scale invariance, monotonicity with respect to the biasing parameter, and asymptotic normality, are established. Extensive simulation studies are carried out to obtain critical values and evaluate the finite-sample performance of the proposed procedure. The proposed test is further compared with the Kullback--Leibler divergence-based test and the likelihood ratio test, and its practical applicability is demonstrated through two real data examples.

\section{The Rényi measure of divergence for weighted distributions}
In his seminal 1961 work \cite{renyi1961measures}, Rényi introduced a generalized framework for information and divergence measures that serve as a natural extension of Shannon entropy \cite{shannon1948mathematical} and the Kullback–Leibler divergence \cite{kullback1951information}. For a probability density function $f$ and $g$ on $R$, the Rényi divergence of order $\alpha$ between $f$ and $g$ is given by,
\begin{equation}
    D_{\alpha}(f \,\|\, g)
= \frac{1}{\alpha-1} 
\ln  \int f(x)^{\alpha} \, g(x)^{1-\alpha} \, dx \
\end{equation}
where $\alpha> 0$ and $\alpha \neq 0 $. Under appropriate conditions, as $\alpha \to 1$, $D_{\alpha}(f \,\|\, g)$ approaches the Kullback–Leibler divergence between $f$ and $g$. For \( f \) and \( f^w \) as defined in \eqref{wpdf}, the Rényi divergence measure is given by,
\begin{align}
    D_{\alpha}(f \,\|\, f^w)
   & = \frac{1}{\alpha-1} \ln \int_{0}^{\infty} 
       f(x)^{\alpha} \, f^w(x)^{1-\alpha} \, dx  \\
    &= \frac{1}{\alpha-1} \ln \int_{0}^{\infty} 
       f(x)^{\alpha} \, \left(\frac{w(x)f(x)}{E[w(X)]}\right)^{1-\alpha} \, dx \nonumber \\
    &= \frac{1}{\alpha-1} \ln E[w^{1-\alpha}(X)] 
       \, + \, \ln E[w(X)]  \label{wrd}
\end{align}
As mentioned earlier, the weight function for size-biased sampling is given by $w(x)=x^r$ for $r>0$. In this case, the Rényi divergence measure of order $\alpha$ given in \eqref{wrd} is given by,
\begin{equation}\label{lbrd}
  D_{\alpha}(f \,\|\, f_r)  = \frac{1}{\alpha-1} \ln E[X^{r(1-\alpha)}] 
       \, + \, \ln E[X^r]
\end{equation}
These divergence measures can also be extended to situations involving censored observations. Consider type I censoring, where the outcome of the random variable is $X^*=\min(X,x_0)$, with $X$ denoting the underlying study variable and $x_0$ being a predetermined censoring point. In this setting, the corresponding Rényi divergence measure is denoted by $D_{\alpha,c}$ to highlight the presence of censoring and it takes the form,
\begin{equation}\label{censeq}
    D_{\alpha,c}(f \,\|\, f_r)  = \frac{1}{\alpha-1} \ln E[X^{*r(1-\alpha)}] 
       \, + \, \ln E[X^{*r}]
\end{equation}
The subsequent propositions outline several properties of the Rényi divergence measures in the present context.
\begin{proposition}
    The Renyi divergence measure $D_{\alpha,c}(f \,\|\, f_r)$ is an increasing function of $r$.
\end{proposition}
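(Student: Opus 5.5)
Write $Y=\ln X^{*}$, which is well defined since $X^{*}>0$ almost surely. Set $\psi(t)=\ln E[e^{tY}]=\ln E[X^{*t}]$, the cumulant generating function of $Y$, which is finite on the relevant range because $X^{*}\le x_0$ bounds the positive moments. Then \eqref{censeq} becomes
\begin{equation*}
D_{\alpha,c}(f\,\|\,f_r)=\frac{\psi(r(1-\alpha))}{\alpha-1}+\psi(r)=\frac{\psi(r)-\psi(r-\alpha r)-\alpha r\,\psi'(0)\cdot 0}{1}\cdot\frac{1}{1}+\Big(\frac{\psi(r(1-\alpha))}{\alpha-1}+\frac{\alpha}{\alpha-1}\cdot 0\Big),
\end{equation*}
but more usefully as
\begin{equation*}
D_{\alpha,c}(f\,\|\,f_r)=\frac{1}{\alpha-1}\Big[(\alpha-1)\psi(r)+\psi\big((1-\alpha)r\big)\Big]=:\frac{g(r)}{\alpha-1}.
\end{equation*}

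The plan is to differentiate in $r$ and use only convexity of $\psi$, which follows from Hölder's inequality, or from $\psi''=\operatorname{Var}$ of $Y$ under the tilted law. The derivative is
\begin{equation*}
\frac{d}{dr}D_{\alpha,c}=\psi'(r)-\psi'\big((1-\alpha)r\big).
\end{equation*}
Since $\psi$ is convex, $\psi'$ is nondecreasing, so the sign of the derivative is the sign of $r-(1-\alpha)r=\alpha r$. For $r>0$ and $\alpha>0$ we have $\alpha r>0$, so $r>(1-\alpha)r$ and hence $\psi'(r)\ge\psi'((1-\alpha)r)$. This gives $\frac{d}{dr}D_{\alpha,c}\ge 0$.

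The two cases $\alpha<1$ and $\alpha>1$ need no separate treatment: the $\frac{1}{\alpha-1}$ factor cancels against the $(1-\alpha)$ produced by the chain rule. I would point this out explicitly, because it is exactly where a sign slip could occur. Strict increase follows from strict convexity of $\psi$, which holds whenever $Y$ is nondegenerate, that is, whenever $X^{*}$ is not almost surely constant. Degeneracy is ruled out unless $X\ge x_0$ almost surely.

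The main technical step is justifying differentiation under the expectation, $\psi'(t)=E[X^{*t}\ln X^{*}]/E[X^{*t}]$, together with finiteness of $E[X^{*t}]$ for negative $t$ when $\alpha>1$. Near zero, $X^{*t}$ with $t<0$ requires the underlying $f$ to have a suitable moment, so I would state this as a standing assumption: $E[X^{-s}]<\infty$ for the relevant $s$. Dominated convergence on compact $t$-intervals then gives $\psi\in C^2$ with $\psi''(t)=\operatorname{Var}_t(Y)\ge0$.

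To avoid derivatives altogether, I could instead use the fact that the slope $\frac{\psi(r)-\psi((1-\alpha)r)}{\alpha r}$ of a convex function with $\psi(0)=0$ is monotone. I would keep this as a backup argument.
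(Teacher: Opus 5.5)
Your proof is correct and is essentially the paper's: the paper also differentiates \eqref{censeq} to get exactly $\psi'(r)-\psi'((1-\alpha)r)$, and it concludes because the tilted mean $E[X^{*t}\ln X^{*}]/E[X^{*t}]$ is nondecreasing in $t$, justifying this only by noting that $\ln x$ and $x^r$ are nondecreasing (implicitly a Chebyshev covariance inequality), so your convexity of $\psi$, the explicit role of $\alpha r>0$, and the negative-moment assumption for $\alpha>1$ make that step rigorous. Two small repairs are needed: delete your first display, which is not an identity (its right-hand side equals $D_{\alpha,c}(f\,\|\,f_r)-\psi((1-\alpha)r)$); and for a derivative-free backup use $D_{\alpha,c}(f\,\|\,f_{r_2})-D_{\alpha,c}(f\,\|\,f_{r_1})=(r_2-r_1)\bigl[S(r_1,r_2)-S((1-\alpha)r_1,(1-\alpha)r_2)\bigr]\ge 0$ for $0<r_1<r_2$, where $S(a,b)=\frac{\psi(b)-\psi(a)}{b-a}$ is nondecreasing in each argument, because monotonicity of $\frac{\psi(r)-\psi((1-\alpha)r)}{\alpha r}$ does not by itself yield the claim (for $\alpha>2$ that quantity can even decrease).
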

\begin{proof}
Differentiating $D_{\alpha,c}(f \,\|\, f_r)$ with respect to $r$ we get,
\begin{align}
    \frac{d}{dr}D_{\alpha,c}(f \,\|\, f_r)
=-\frac{E[X^{*r(1-\alpha)}\ln X]}{E[X^{*r(1-\alpha)}]}+ \frac{E[X^{*r}\ln X]}{E[X^{*r}]} \label{1}
\end{align}
Since both the functions $\ln x$ and $x^r$ in \eqref{1} are nondecreasing, it follows that $D_{\alpha,c}(f \,\|\, f_r)$ is an increasing function for $r$ when  $\alpha> 0$ and $\alpha \neq 0 $.
\end{proof}
  The following two propositions are useful for simulation studies. Before proceeding, we recall that $\beta$ is a scale parameter if the cdf is $F(x,\beta)$.

\begin{proposition}
The divergence measure $D_{\alpha,c}(f \,\|\, f_r)$ is independent of the scale parameter of the distribution function $F$.
\end{proposition}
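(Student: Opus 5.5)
The approach is to write the scale family in standardized form and track how each of the two moments in \eqref{censeq} depends on the scale parameter. Let $\beta$ be the scale parameter, so $F(x;\beta)=F(x/\beta;1)$. Then $X \stackrel{d}{=} \beta Y$, where $Y$ has cdf $F(\cdot\,;1)$, which does not involve $\beta$. For type I censoring we have
\begin{equation*}
X^*=\min(X,x_0)\stackrel{d}{=}\beta\,\min\!\left(Y,\,x_0/\beta\right)=\beta\,Y^*,
\end{equation*}
where $Y^*=\min(Y,u_0)$ and $u_0=x_0/\beta$.

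The statement is only meaningful if the censoring mechanism is specified in a scale-free way. I take the censoring proportion $p=P(X>x_0)=1-F(x_0;\beta)$ to be held fixed, which is also what the simulation set-up suggests. Then $u_0=F^{-1}(1-p;1)$ does not depend on $\beta$, and hence the law of $Y^*$ is free of $\beta$. In the uncensored case $x_0=\infty$ this step is trivial. I expect this to be the only real subtlety. If $x_0$ were instead a fixed number, $D_{\alpha,c}$ would depend on $\beta$ through $x_0/\beta$, so the plan must make this convention explicit.

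Next, for any real $s$ for which the moment exists, $E[X^{*s}]=\beta^{s}E[Y^{*s}]$. I apply this with $s=r(1-\alpha)$ and with $s=r$ and substitute into \eqref{censeq}:
\begin{equation*}
D_{\alpha,c}(f\,\|\,f_r)=\frac{1}{\alpha-1}\Big[r(1-\alpha)\ln\beta+\ln E[Y^{*r(1-\alpha)}]\Big]+r\ln\beta+\ln E[Y^{*r}].
\end{equation*}
The two $\ln\beta$ contributions are $-r\ln\beta$ and $+r\ln\beta$, and they cancel. What remains is
\begin{equation*}
\frac{1}{\alpha-1}\ln E[Y^{*r(1-\alpha)}]+\ln E[Y^{*r}],
\end{equation*}
which depends only on $F(\cdot\,;1)$, $r$, $\alpha$ and $p$. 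This proves the claim.

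Finally, I note that the finiteness of the moments is unaffected by scaling, since $E[X^{*s}]<\infty$ if and only if $E[Y^{*s}]<\infty$. Under censoring these moments are automatically finite for $s\ge 0$, because $Y^*\le u_0$.
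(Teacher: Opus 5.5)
Your proof is correct and follows the same route as the paper, which simply substitutes $x^*=\beta y^*$ in \eqref{censeq} so that the $-r\ln\beta$ and $+r\ln\beta$ contributions cancel. Your remark that the censoring point must scale with $\beta$, i.e.\ that the censoring proportion is held fixed as in Proposition~\ref{prop3}, makes explicit an assumption the paper's one-line argument leaves implicit.
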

\begin{proof}
    The proof follows immediately upon making the substitution $x^*=\beta y^*$ in equation \eqref{censeq}.
\end{proof}

\begin{proposition}\label{prop3}
    If $X$ has a distribution function of the form,\\
    \begin{equation*}
F(x,\beta,\eta)=F\left[\left(\frac{x}{\beta}\right)^\eta,1,1\right]
\end{equation*}
where $\beta$ denotes the scale parameter and $\eta$ the shape parameter, then the Rényi divergence measure $D_{\alpha,c}(f \,\|\, f_r)$ depends only on the ratio $\eta/r$ and the censoring rate $q$.
\end{proposition}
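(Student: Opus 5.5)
We work directly from the censored form \eqref{censeq}: everything reduces to the two moments $E[X^{*s}]$ with $s=r$ and $s=r(1-\alpha)$. The plan is to compute a general moment $E[X^{*s}]$ under the stated family and show that, after removing $\beta$, it depends only on $s/\eta$ and on the censoring point through the censoring rate $q$.

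\textbf{Step 1: reduce to a standard variable.} Put $Y=(X/\beta)^\eta$. Since $F(x,\beta,\eta)=F[(x/\beta)^\eta,1,1]$, the variable $Y$ has the fixed, parameter-free distribution $G(y)=F(y,1,1)$. The censoring point is mapped to $y_0=(x_0/\beta)^\eta$. The censoring rate is
\[
q=P(X>x_0)=1-G(y_0),
\]
so $y_0=G^{-1}(1-q)$ is a function of $q$ alone. Moreover $X=\beta Y^{1/\eta}$ and $X^*=\min(X,x_0)=\beta\min(Y,y_0)^{1/\eta}$.

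\textbf{Step 2: compute the moments.} For any $s$,
\[
E[X^{*s}]=\beta^{s}\left\{\int_0^{y_0}y^{s/\eta}\,dG(y)+q\,y_0^{s/\eta}\right\}=\beta^s\,H(s/\eta,q),
\]
where $H$ depends only on its two arguments, because $G$ is fixed.

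\textbf{Step 3: substitute into \eqref{censeq}.} The $\beta$ factors contribute
\[
\frac{r(1-\alpha)}{\alpha-1}\ln\beta+r\ln\beta=0,
\]
which is consistent with the preceding proposition. What remains is
\[
D_{\alpha,c}=\frac{1}{\alpha-1}\ln H\!\left(\frac{r(1-\alpha)}{\eta},q\right)+\ln H\!\left(\frac{r}{\eta},q\right).
\]
For fixed order $\alpha$, this is a function of $r/\eta$, equivalently of $\eta/r$, and of $q$ only. The uncensored case is the special case $q=0$, $y_0=\infty$.

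\textbf{Main obstacle.} The only delicate point is justifying that $y_0$ is determined by $q$, which requires $G$ to be invertible at the relevant level. I would handle this by assuming $G$ is continuous and strictly increasing on its support, which holds for the Weibull case with $G(y)=1-e^{-y}$, where $y_0=-\ln q$. Failing that, I would work with the generalized inverse and note that the integrals are unchanged on flat pieces of $G$. It should also be stated explicitly that the dependence on $\alpha$ is treated as fixed.
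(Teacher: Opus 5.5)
Your proof is correct and is essentially the paper's argument. The paper also substitutes $u=x^\eta$ in $E[X^{*r}]$ and $E[X^{*r(1-\alpha)}]$ and obtains $\int_0^{F^{-1}(1-q)}u^{s/\eta}\,dF(u)+q\,[F^{-1}(1-q)]^{s/\eta}$. The only differences are that the paper simply sets $\beta=1$ (implicitly leaning on the preceding scale-invariance proposition), whereas you carry $\beta$ along and show it cancels, and you also make the assumption that $G$ is invertible explicit.

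One small correction to your fallback remark. On a flat piece of $G$ at level $1-q$, the integral is indeed unchanged, but the boundary term $q\,y_0^{s/\eta}$ still varies with $y_0$. So it is the convention $y_0=G^{-1}(1-q)$, via the generalized inverse, that makes $y_0$ a function of $q$, not the invariance of the integrals.
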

\begin{proof}
Let us consider only the case when $\beta=1$. Accordingly, we write $F(x)$ instead of $F[(x)^\eta,1,1]$. The value $x_q$ represents the censoring point corresponding to the censoring rate $q$, then
\begin{equation*}
    F[(x_q)^{\eta}]= 1-q
\end{equation*}
We have,
\begin{equation*}
    E(X^{*r})= \int_{0}^{x_q} x^r dF(x^\eta) + x_q^r(1-F(x_q^\eta))
\end{equation*}
Through the substitution $u=x^\eta$, the expression can be rewritten as follows:
\begin{align*}
    E(X^{*r})&= \int_{0}^{F^{-1}(1-q)}u^{r/\eta} \,dF(u) + F^{-r/\eta}(1-q)(1-(1-q)) \\
    &=\int_{0}^{F^{-1}(1-q)}u^{r/\eta} \,dF(u) + qF^{-r/\eta}(1-q)
\end{align*}
which depends only on the ratio $\eta/r$ and the censoring rate $q$. Likewise we obtain,
\begin{equation*}
    E(X^{*r(1-\alpha)})=  \int_{0}^{x_q} x^{r(1-\alpha)} dF(x^\eta) + x_q^{r(1-\alpha)}(1-F(x_q^\eta))
\end{equation*}
Using the transformation as above, we get,
\begin{equation*}
    E(X^{*r(1-\alpha)})= \int_{0}^{F^{-1}(1-q)}u^{\frac{r}{\eta}(1-\alpha)} \,dF(u) + qF^{-\frac{r}{\eta}(1-\alpha)}(1-q)
\end{equation*}
which concludes the proof.
\end{proof}
\section{The empirical measure of divergence and the corresponding proposed test}
In this section, we develop tests for detecting bias in a sample using the Rényi divergence between the density functions $f$ and $ f_w$. We also define the empirical measure of $D_{\alpha,c}(f \,\|\, f^w)$ and present the proposed tests. While earlier studies have employed divergence measures—such as Kullback-Leibler divergence for detecting bias \cite{economou2014kullback} and Rényi distance of the equilibrium distributions is constructed to test exponentiality \cite{sadeghpour2018exponentiality}. Here we want to test the hypothesis
\begin{equation}\label{hypothesis1}
    H_0 : w(x) = 1~~~vs.~~~ H_1: w(x) = w_0(x)
\end{equation}
Although the proposed methods are formulated for a general weight function, particular emphasis is placed on the size-biased context, as it is one of the most commonly encountered cases in practice. In this context, the preceding hypotheses can be reformulated as follows.
\begin{equation}\label{hypothesis2}
    H_0 : r = 0~~~vs.~~~ H_1: r = r_0
\end{equation}
where $w_0(x)$ or $r_0$ is fixed and $r_0$ takes the values $1$ or $2$.
\subsection{The empirical measure of Rényi divergence}
Let $x_1^*,x_2^*,\ldots, x_n^*$ be an observed sample of size $n$ from the true distribution with density function $f(\cdot)$, then the expected values $E[w^{1-\alpha}(X^*)]$ and $E[w(X^*)]$ can be approximated by,
\begin{equation*}
    E[w^{1-\alpha}(X^*)] \approx \frac{1}{n}\sum_{i=1}^{n}w^{1-\alpha}(x^*)
\end{equation*}
also 
\begin{equation*}
    E[w(X^*)] \approx \frac{1}{n}\sum_{i=1}^{n}w(x^*)
\end{equation*}
Thus, the expression for $D_{\alpha,c}(f \,\|\, f^w)$ in equation \eqref{wrd} may be approximated as follows,
\begin{align}
    D_{\alpha,c}^E(f \,\|\, f^w)&= \frac{1}{\alpha-1}\, \ln \left[\frac{1}{n}\sum_{i=1}^{n}w^{1-\alpha}(x^*)\left[\frac{1}{n}\sum_{i=1}^{n}w(x^*)\right]^{\alpha-1}\right] \\ \nonumber
    &= \frac{1}{\alpha-1}\ln \left[\frac{\frac{1}{n}\sum_{i=1}^{n}w^{1-\alpha}(x^*)}{[\frac{1}{n}\sum_{i=1}^{n}w(x^*)]^{1-\alpha}}\right]
    \end{align}
    which can be expressed as,
    \begin{align}
  D_{\alpha,c}^E(f \,\|\, f^w)  =-\, \ln\left(\frac{\text{Power mean of order $(1-\alpha)$ of $w(x_i^*)$ }}{\text{Arithmetic mean of $w(x_i^*)$}}\right) \label{statistic}
\end{align}
When $\alpha \to 1$, that is, in the case of Kullback-Leibler divergence, the numerator, which is the power mean of order $(1-\alpha)$, becomes the geometric mean of $w(x_i^*)$. The above statistics $D_{\alpha,c}^E(f \,\|\, f^w)$ can be used for testing the hypothesis \eqref{hypothesis1}. Therefore, the null hypothesis is rejected at the significance level $\alpha_c$ whenever,
\begin{equation*}
    D_{\alpha,c}^E(f \,\|\, f^w) < u(\alpha_c,q,n)
\end{equation*}
where $u(\alpha_c,q,n)$ denotes the corresponding critical value.
Equivalently, the test rejects $H_0$ when,
\begin{equation*}
    \gamma > u(\alpha_c,q,n)
\end{equation*}
where,
\begin{equation}\label{teststat}
    \gamma= \frac{\text{Power mean of order $(1-\alpha)$ of $w_0(x_i^*)$ }}{\text{Arithmetic mean of $w_0(x_i^*)$}}
\end{equation}
In the context of size-bias sampling, the test statistics $\gamma$ given in \eqref{teststat}, can be expressed as follows,
\begin{equation}\label{lengthteststat}
    \gamma= \frac{\left[\frac{1}{n}\sum_{i=1}^{n}x^{*r(1-\alpha)}\right]^{\frac{1}{1-\alpha}}}{\bar{x^{*r}}}
\end{equation}
This statistic retains some properties of the Rényi divergence. In particular,  we can prove that it is independent of the scale parameter. \\
The large sample properties of the following statistic \eqref{lengthteststat} can be derived using the following theorem.
\begin{thm}
    Let $x_1^*,x_2^*, \ldots,x_n^*$ be a censored sample from $X$ with pdf $f(\cdot)$. Then the asymptotic distribution of $\gamma$ as defined in \eqref{teststat}, under the null hypothesis \eqref{hypothesis1} is normal with,
    \begin{equation*}
        \gamma \xrightarrow{D} N\left[\frac{\mu^{1/1-\alpha}}{E(w(X^*))},\frac{\sigma^2\mu^{\frac{2\alpha}{1-\alpha}}}{(1-\alpha)^2[E(w(X^*))]^2}\right]
  \end{equation*}
    \end{thm}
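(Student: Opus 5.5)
The plan is to combine the central limit theorem for the numerator of $\gamma$, the delta method, and Slutsky's lemma to handle the denominator. Under $H_0$ in \eqref{hypothesis1} the observations $x_1^*,\ldots,x_n^*$ are i.i.d. copies of $X^*=\min(X,x_0)$ with $X\sim f$. I will write
\begin{equation*}
\mu=E[w_0^{1-\alpha}(X^*)],\qquad \sigma^2=\mathrm{Var}[w_0^{1-\alpha}(X^*)],
\end{equation*}
and I will assume these are finite, together with $E[w_0(X^*)]<\infty$. With censoring at $x_0$ this finiteness is automatic for $w_0(x)=x^r$ and $\alpha<1$, since then $X^*$ is bounded. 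Let $U_n=\frac1n\sum_i w_0^{1-\alpha}(x_i^*)$ and $V_n=\frac1n\sum_i w_0(x_i^*)$, so that $\gamma=U_n^{1/(1-\alpha)}/V_n$.

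First, the classical CLT gives $\sqrt n\,(U_n-\mu)\xrightarrow{D}N(0,\sigma^2)$. Next I apply the delta method with $g(t)=t^{1/(1-\alpha)}$. This map is differentiable at $\mu>0$, with
\begin{equation*}
g'(\mu)=\frac{1}{1-\alpha}\,\mu^{\alpha/(1-\alpha)}.
\end{equation*}
It follows that
\begin{equation*}
\sqrt n\,\bigl(U_n^{1/(1-\alpha)}-\mu^{1/(1-\alpha)}\bigr)\xrightarrow{D}N\Bigl(0,\frac{\sigma^2\mu^{2\alpha/(1-\alpha)}}{(1-\alpha)^2}\Bigr).
\end{equation*}

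For the denominator, the strong law gives $V_n\to E[w_0(X^*)]$ almost surely. Dividing by $V_n$ and invoking Slutsky's lemma then yields asymptotic normality. The centering is $\mu^{1/(1-\alpha)}/E[w_0(X^*)]$, and the variance is the one above divided by $[E(w_0(X^*))]^2$; this is exactly the pair of parameters in the statement. The statement should be read as an approximation for large $n$, with the variance understood to carry the factor $1/n$; that is, the limit is really that of $\sqrt n$ times the centered $\gamma$.

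The main obstacle is precisely the denominator. A fully rigorous joint treatment would apply the bivariate CLT to $(U_n,V_n)$ and the delta method to $h(u,v)=u^{1/(1-\alpha)}/v$. That gives a gradient with a second component $-\mu^{1/(1-\alpha)}/[E w_0(X^*)]^2$, so the variance picks up extra terms involving $\mathrm{Var}(w_0(X^*))$ and $\mathrm{Cov}(w_0^{1-\alpha}(X^*),w_0(X^*))$. The stated variance keeps only the contribution of the numerator, which amounts to treating $V_n$ as its limit $E[w_0(X^*)]$, i.e.\ as the known normalizing constant. I would therefore present the Slutsky argument, state explicitly that this is the sense in which the variance formula holds, and remark that the full bivariate delta method gives the complete asymptotic variance if the fluctuation of $V_n$ is retained.
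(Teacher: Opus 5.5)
Your route is the paper's route step for step. Both use the CLT for $U_n$, the delta method with $g(t)=t^{1/(1-\alpha)}$, the law of large numbers for $V_n$, and Slutsky. The paper's proof stops exactly where your third paragraph does. From $\sqrt n\,(U_n^{1/(1-\alpha)}-\mu^{1/(1-\alpha)})/V_n\xrightarrow{D}N(0,\cdot)$ it simply writes ``thus'' $\sqrt n\,(\gamma-\mu^{1/(1-\alpha)}/E[w(X^*)])\xrightarrow{D}N(0,\cdot)$ with the same variance.

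The objection in your last paragraph is correct, and it breaks that step outright. Write $m=E[w(X^*)]$. Slutsky only controls $\sqrt n\,(\gamma-\mu^{1/(1-\alpha)}/V_n)$, and
\begin{equation*}
\sqrt n\Bigl(\gamma-\frac{\mu^{1/(1-\alpha)}}{m}\Bigr)=\sqrt n\Bigl(\gamma-\frac{\mu^{1/(1-\alpha)}}{V_n}\Bigr)+\mu^{1/(1-\alpha)}\sqrt n\Bigl(\frac{1}{V_n}-\frac{1}{m}\Bigr).
\end{equation*}
The last term converges to a nondegenerate normal, asymptotically $-\mu^{1/(1-\alpha)}m^{-2}\sqrt n(V_n-m)$. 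It is correlated with the first term, so it cannot be dropped. Your third-paragraph sentence that Slutsky gives ``exactly the pair of parameters in the statement'' for $\gamma$ is therefore the same unjustified step. Do not present it as a proof of the theorem.

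In fact the stated variance is not the asymptotic variance of $\gamma$, so no argument can prove the statement as written. Your bivariate delta method gives
\begin{equation*}
\frac{\sigma^2\mu^{2\alpha/(1-\alpha)}}{(1-\alpha)^2m^2}-\frac{2\mu^{(1+\alpha)/(1-\alpha)}}{(1-\alpha)m^3}\,\mathrm{Cov}\bigl(w^{1-\alpha}(X^*),w(X^*)\bigr)+\frac{\mu^{2/(1-\alpha)}}{m^4}\,\mathrm{Var}\bigl(w(X^*)\bigr),
\end{equation*}
and the two extra terms do not cancel in general. Two checks show this:
\begin{itemize}
\item At $\alpha=0$ one has $\gamma\equiv 1$. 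Accordingly, the display above tends to $0$ as $\alpha\to 0$, while the stated variance tends to $\mathrm{Var}(w(X^*))/m^2>0$.
\item Take $\alpha=1/2$ and a step weight with $w_0(X^*)\in\{1,4\}$, each with probability $1/2$. Then $\gamma$ is a smooth function of a single sample proportion, and its true asymptotic variance is $0.0036$. The stated formula gives $0.36$.
\end{itemize}
The numerator and denominator move together, so ignoring the fluctuation of $V_n$ grossly overstates the variance.

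What you can honestly prove is twofold. First, $\gamma$ is asymptotically normal with the stated centering and the variance displayed above. Second, the stated variance is the limit variance of $\sqrt n\,(\gamma-\mu^{1/(1-\alpha)}/V_n)$, equivalently of $\sqrt n\,\bigl(U_n^{1/(1-\alpha)}-\mu^{1/(1-\alpha)}\bigr)/m$ with $m$ known. Frame this as a correction of the theorem, not as ``the sense in which'' it holds.

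A minor point on moments: they are not automatic for $\alpha>1$, even under censoring, because $w_0^{1-\alpha}(x)=x^{-r(\alpha-1)}$ blows up at $0$. For the Weibull you need $2r(\alpha-1)<\eta$ for $\sigma^2<\infty$.
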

\begin{proof}
We have,
\begin{equation*}
\gamma= \frac{\left[\frac{1}{n}\sum_{i=1}^{n}w^{1-\alpha}(x_i^*)\right]^{\frac{1}{1-\alpha}}}{\frac{1}{n}\sum_{i=1}^{n}w(x_i^*)}
\end{equation*}
Let $\mu= E[w^{1-\alpha}(x_i^*)]$ and $\sigma^2 = V[w^{1-\alpha}(x_i^*)]$\\
By the Central Limit Theorem we have,
\begin{equation*}
    \sqrt{n}\left[\frac{1}{n}\sum_{i=1}^{n}w^{1-\alpha}(x_i^*)-\mu\right] \xrightarrow{D} N(0,\sigma^2)
\end{equation*}
Now consider the function $g(x)= x^{1/1-\alpha}$\\
The Delta method applied to $g\left(\frac{1}{n}\sum_{i=1}^{n}w^{1-\alpha}(x_i^*)\right)$ yields,
\begin{equation}\label{a}
     \sqrt{n}\left[\left[\frac{1}{n}\sum_{i=1}^{n}w^{1-\alpha}(x_i^*)\right]^{1/1-\alpha}-{\mu}^{1/1-\alpha}\right] \xrightarrow{D} N\left[0,\sigma^2\left[\frac{\mu^{\alpha/1-\alpha}}{1-\alpha}\right]^2\right]
\end{equation}
On the other hand,
\begin{equation}\label{b}
    \frac{1}{n}\sum_{i=1}^{n}w(x_i^*) \xrightarrow{P} E[w(x_i^*)]
\end{equation}
Finally, with the use of Slutsky's theorem, we obtain from \eqref{a} and \eqref{b},
\begin{equation}
    \sqrt{n}\left[\frac{\left[\frac{1}{n}\sum_{i=1}^{n}w^{1-\alpha}(x_i^*)\right]^{1/1-\alpha}-{\mu}^{1/1-\alpha}}{ \frac{1}{n}\sum_{i=1}^{n}w(x_i^*)}\right] \xrightarrow{D} N\left[0,\frac{\sigma^2\left[\frac{\mu^{\alpha/1-\alpha}}{1-\alpha}\right]^2}{ [\frac{1}{n}\sum_{i=1}^{n}w(x_i^*)]^2}\right] \\
   \end{equation}
   Thus,
   \begin{equation}
    \sqrt{n}\left[\gamma-\frac{{\mu}^{1/1-\alpha}}{E[w(x_i^*)]}\right] \xrightarrow{D} N\left[0,\frac{\sigma^2\mu^{\frac{2\alpha}{1-\alpha}}}{(1-\alpha)^2[E(w(x_i^*))]^2}\right]
\end{equation}
which completes the proof.
\end{proof}
    
\section{Numerical analysis}

In this section, we present the critical values of the proposed test statistic $\gamma$, based on the Rényi divergence measure, using the weight function $w(x)=x^{r_0}$, for the Weibull distribution $W(\beta,\eta)$. The Weibull distribution is adopted in this study since it satisfies the conditions stated in Proposition \ref{prop3}, making it suitable for the proposed test. The critical values are obtained for various sample sizes, censoring rates, and choices of the shape parameter. Additionally, a comprehensive power study is conducted to assess the performance and effectiveness of the proposed test.

\subsection{Simulation study}
Due to the simple analytical structure of the empirical Rényi divergence measure, the computation of critical values can be carried out efficiently using R software. At the significance level
$\alpha_c=0.05$, critical values for the proposed Rényi divergence–based test with weight function $w(x)=x^{r_0}$ are obtained through simulation for various sample sizes, censoring proportions, and values of the shape parameter $\eta$. Since the test statistic $\gamma$ is invariant with respect to the scale parameter $\beta$ and depends only on the ratio $\eta/r$, the simulation study is conducted by fixing $\beta=1$.
\par
Overall, the proposed test exhibits satisfactory performance across a wide range of settings. However, its performance may be affected by the presence of extreme observations, such as very large values or values close to zero. This behavior arises from the sensitivity of the test statistic $\gamma$  because of the denominator. The $\gamma$ is sensitive to large values. Consequently, extreme values of $\gamma$ are interpreted as evidence in favor of the alternative hypothesis, and appropriate lower and upper threshold values are introduced to define the rejection region. Thus, the null hypothesis is rejected when
\begin{equation*}
    \gamma < l(\alpha_{2},q,n)~~~or~~~ \gamma > u(\alpha_1,q,n) 
\end{equation*}
\par
The critical values are selected to preserve the overall significance level $\alpha_c$ of the test. Following the significance-level partitioning approach of \cite{economou2014kullback}, the level $\alpha_c$ is divided as $\alpha_1=p\alpha_c$ and $\alpha_2=(1-p)\alpha_c$, where $0\leq p\leq 1$. They examined several values of $p$, and the choice of $p=0.90$ was found to provide an appropriate. Accordingly, throughout the subsequent analysis, we adopt $\alpha_1=0.9\alpha_c$ and $\alpha_2=0.1\alpha_c$, and, since the Rényi divergence test statistic $\gamma$ is related to $\alpha$, its power performance is examined for different values of $\alpha$. \noindent \par
From Figure \ref{fig1}, it is observed that as the sample size $n$ increases, the average power increases. Moreover, the maximum happens when $\alpha$ is $0.5$. Consequently, $\alpha=0.5$ is selected for the computation of critical values and power comparisons in the subsequent analysis.
\begin{figure}[h]
\centering

\includegraphics[width=0.9\textwidth]{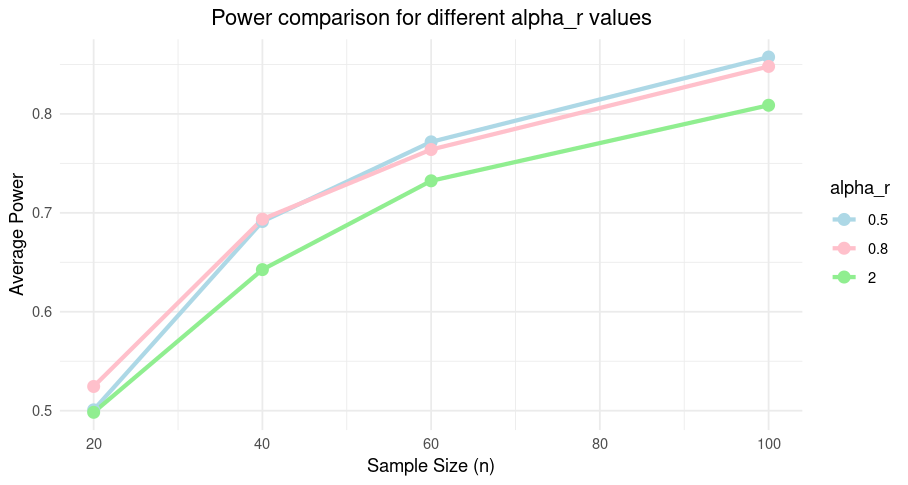}
\caption{}\label{fig1}
\end{figure}

\subsubsection{Critical values of $\gamma$ for the uncensored case}

The 10,000 random samples were generated from the Weibull distribution in order to find the critical values of $\gamma$, that is, to find the critical values for testing the hypothesis \eqref{hypothesis2}. Samples were generated from $W(\beta=1,\eta)$ distribution for different sample sizes $n$ and shape parameter $\eta$. The resulting lower and critical values corresponding to $\alpha_c=0.05,~p=0.9,~ \text{and}~ q=0$ are presented in Tables \ref{tab:lowercritical_gamma_uncensored} and \ref{tab:uppercritical_gamma_uncensored}, respectively. 
An inspection of these tables reveals that the lower critical values of $\gamma$ increase with an increase in either the sample size or the shape parameter, whereas the upper critical values decrease as the sample size increases and increase with larger values of $\eta$.

\begin{table}[H]
\centering
\caption{The lower critical values $l(0.005,0,n,\eta)$ of $\gamma$ for the Weibull distribution and $r_0=1$. }
\label{tab:lowercritical_gamma_uncensored}
\small
\begin{tabular}{c cccccccccc}
\toprule
  \multicolumn{10}{c}{Sample size $(n)$} \\
\cmidrule(lr){2-11}
$\eta$ 
& 20 & 40 & 60 & 80 & 100 & 150 & 200 & 350 & 500 & 1000 \\
\midrule
0.3 & 0.1196 & 0.1043 & 0.1064 & 0.1146 & 0.1146 & 0.1259 & 0.1324 & 0.1479 & 0.1586 & 0.1806 \\
0.4 & 0.1908 & 0.2134 & 0.2225 & 0.2393 & 0.2401 & 0.2651 & 0.2752 & 0.3046 & 0.3163 & 0.3360 \\
0.5 & 0.2916 & 0.3248 & 0.3535 & 0.3677 & 0.3777 & 0.3931 & 0.4100 & 0.4296 & 0.4429 & 0.4594 \\
0.6 & 0.3869 & 0.4305 & 0.4566 & 0.4693 & 0.4804 & 0.5019 & 0.5120 & 0.5310 & 0.5406 & 0.5532 \\
0.7 & 0.4701 & 0.5208 & 0.5445 & 0.5570 & 0.5629 & 0.5816 & 0.5879 & 0.6072 & 0.6155 & 0.6279 \\
0.8 & 0.5426 & 0.5918 & 0.6074 & 0.6194 & 0.6313 & 0.6443 & 0.6537 & 0.6664 & 0.6740 & 0.6834 \\
0.9 & 0.5968 & 0.6458 & 0.6586 & 0.6731 & 0.6827 & 0.6968 & 0.7018 & 0.7143 & 0.7203 & 0.7290 \\
1.0 & 0.6473 & 0.6904 & 0.7034 & 0.7181 & 0.7229 & 0.7347 & 0.7423 & 0.7515 & 0.7575 & 0.7656 \\
1.1 & 0.6896 & 0.7240 & 0.7396 & 0.7525 & 0.7568 & 0.7653 & 0.7730 & 0.7831 & 0.7879 & 0.7951 \\
1.2 & 0.7215 & 0.7541 & 0.7692 & 0.7787 & 0.7829 & 0.7946 & 0.7993 & 0.8087 & 0.8131 & 0.8200 \\
1.3 & 0.7504 & 0.7796 & 0.7937 & 0.8026 & 0.8081 & 0.8164 & 0.8214 & 0.8312 & 0.8335 & 0.8398 \\
1.4 & 0.7729 & 0.8017 & 0.8139 & 0.8209 & 0.8271 & 0.8364 & 0.8399 & 0.8472 & 0.8516 & 0.8566 \\
1.5 & 0.7905 & 0.8200 & 0.8324 & 0.8390 & 0.8444 & 0.8524 & 0.8556 & 0.8621 & 0.8665 & 0.8711 \\
2.0 & 0.8638 & 0.8838 & 0.8913 & 0.8985 & 0.8992 & 0.9054 & 0.9078 & 0.9136 & 0.9153 & 0.9188 \\
3.0 & 0.9255 & 0.9383 & 0.9440 & 0.9466 & 0.9488 & 0.9514 & 0.9532 & 0.9558 & 0.9572 & 0.9592 \\
4.0 & 0.9536 & 0.9615 & 0.9655 & 0.9676 & 0.9685 & 0.9706 & 0.9713 & 0.9735 & 0.9742 & 0.9754 \\
8.0 & 0.9861 & 0.9887 & 0.9899 & 0.9906 & 0.9909 & 0.9915 & 0.9920 & 0.9926 & 0.9928 & 0.9932 \\
\bottomrule
\end{tabular}
\end{table}

\begin{table}[H]
\centering
\caption{The upper critical values $u(0.045,0,n,\eta)$ of $\gamma$ for the Weibull distribution and $r_0=1$. }
\label{tab:uppercritical_gamma_uncensored}
\small
\begin{tabular}{c cccccccccc}
\toprule
  \multicolumn{10}{c}{Sample size $(n)$} \\
\cmidrule(lr){2-11}
$\eta$ 
& 20 & 40 & 60 & 80 & 100 & 150 & 200 & 350 & 500 & 1000 \\
\midrule
0.3 & 0.5128 & 0.4361 & 0.4003 & 0.3817 & 0.3681 & 0.3459 & 0.3320  & 0.3115 & 0.3029 & 0.2862 \\
0.4 & 0.6128 & 0.5465 & 0.5167 & 0.5009 & 0.4906 & 0.4712 & 0.4581 & 0.4424 & 0.4329 & 0.4194 \\
0.5 & 0.6906 & 0.6362 & 0.6119 & 0.5958 & 0.5851 & 0.5692 & 0.5605 & 0.5460  & 0.5380  & 0.5271 \\
0.6 & 0.7518 & 0.7022 & 0.6825 & 0.6685 & 0.6611 & 0.646  & 0.6382 & 0.6265 & 0.6195 & 0.6109 \\
0.7 & 0.7970  & 0.7556 & 0.7368 & 0.7249 & 0.7179 & 0.7058 & 0.6990  & 0.6889 & 0.6829 & 0.6754 \\
0.8 & 0.8301 & 0.7943 & 0.7790  & 0.7696 & 0.7627 & 0.7520  & 0.7471 & 0.7384 & 0.7332 & 0.7257 \\
0.9 & 0.8569 & 0.8265 & 0.8126 & 0.8034 & 0.7981 & 0.7902 & 0.7850  & 0.7771 & 0.7724 & 0.7658 \\
1.0   & 0.8775 & 0.8504 & 0.8397 & 0.8314 & 0.8260  & 0.8191 & 0.8147 & 0.8074 & 0.8035 & 0.7981 \\
1.1 & 0.8954 & 0.8713 & 0.8605 & 0.8539 & 0.8500   & 0.8428 & 0.8386 & 0.8326 & 0.8290  & 0.8245 \\
1.2 & 0.9090  & 0.8877 & 0.8785 & 0.8734 & 0.8692 & 0.8626 & 0.8588 & 0.8529 & 0.8497 & 0.8458 \\
1.3 & 0.9201 & 0.9014 & 0.8930  & 0.8868 & 0.8838 & 0.8788 & 0.8750  & 0.8700   & 0.8675 & 0.8638 \\
1.4 & 0.9300   & 0.9132 & 0.9054 & 0.9006 & 0.8966 & 0.8921 & 0.8890  & 0.8843 & 0.8819 & 0.8786 \\
1.5 & 0.9383 & 0.9215 & 0.9147 & 0.9110  & 0.9086 & 0.9036 & 0.9006 & 0.8967 & 0.8945 & 0.8913 \\
2.0   & 0.9627 & 0.9526 & 0.9486 & 0.9457 & 0.9439 & 0.9407 & 0.9387 & 0.9359 & 0.9345 & 0.9324 \\
3.0   & 0.9829 & 0.9776 & 0.9752 & 0.9739 & 0.9727 & 0.9713 & 0.9702 & 0.9687 & 0.9679 & 0.9667 \\
4.0   & 0.9899 & 0.9871 & 0.9856 & 0.9847 & 0.9841 & 0.9831 & 0.9825 & 0.9816 & 0.9810  & 0.9803 \\
8.0   & 0.9974 & 0.9966 & 0.9962 & 0.9960  & 0.9958 & 0.9956 & 0.9953 & 0.9951 & 0.9949 & 0.9947\\
\bottomrule
\end{tabular}
\end{table}

\subsubsection{Critical values of $\gamma$ for the censored case}

For the censored sample setting, the lower and upper critical values, denoted by $\ell(0.005, q, n, \eta)$ and $u(0.045, q,n,\eta)$, were obtained for censoring proportions of $q=20\%$ and $q=40\%$. Similar to the uncensored case, $10,000$ random samples were generated for each combination of sample sizes $n = 20, 40, 60, 100, 350, 500, \text{and}~ 1000$, considering the same set of values for the Weibull shape parameter $\eta$. The censoring threshold $x_0$ was selected in such a way that approximately $q\%$ of the observations were censored, satisfying the conditions $F(x_0)= 1-q$. The corresponding critical values are reported in Tables \ref{table 3} and \ref{table 4}. The numerical results indicate that the monotonic behaviour observed in the uncensored case remains valid under censoring as well, with both lower and upper critical values showing systematic variation with respect to the sample size $n$ and the parameter $\eta$. For fixed  $n$ and $\eta$, the lower and upper critical values are consistently large for $q=0.4$ than for $q=0.2$.
It is also observed that these critical values become larger as the censoring proportion increases. Furthermore, the rate of convergence towards the asymptotic distribution is faster for larger values of $\eta$.


\begin{landscape}

\begin{table}[htbp]
\centering
\caption{The lower $\ell(0.005, q, n, \eta)$ critical values of the $\gamma$ test for the Weibull distribution for $r_0 = 1$ and $q = 0.2, 0.4$.}

\renewcommand{\arraystretch}{1.15}
\small   

\begin{tabular}{c ccccccc ccccccc}
\toprule

\multirow{3}{*}{$\eta$}
& \multicolumn{7}{c}{$q = 0.2$}
& \multicolumn{7}{c}{$q = 0.4$} \\

\cmidrule(lr){2-8}
\cmidrule(lr){9-15}

& \multicolumn{7}{c}{Sample size}
& \multicolumn{7}{c}{Sample size} \\

\cmidrule(lr){2-8}
\cmidrule(lr){9-15}

& 20 & 40 & 60 & 100 & 350 & 500 & 1000
& 20 & 40 & 60 & 100 & 350 & 500 & 1000 \\

\midrule

0.3
& 0.3312&0.3788&0.4066&0.431&0.472&0.4824&0.4932
& 0.4542&0.5185&0.5478&0.5742&0.6183&0.6283&0.6403 \\

0.4
& 0.4141&0.475&0.4931&0.519&0.5594&0.5659&0.5767
& 0.5347&0.5888&0.6162&0.6422&0.6806&0.6879&0.6994 \\

0.5
& 0.4992&0.5485&0.5668&0.5888&0.6286&0.6345&0.6447
& 0.5963&0.6446&0.6734&0.6925&0.7309&0.7374&0.7477 \\

0.6
& 0.5596&0.6067&0.6304&0.6469&0.6838&0.6882&0.6982
& 0.6487&0.6909&0.7172&0.7354&0.7699&0.7752&0.7833\\

0.7
& 0.6139&0.6561&0.6771&0.6964&0.7272&0.7331&0.7417
& 0.6924&0.7336&0.7512&0.7696&0.7998&0.8058&0.8143\\

0.8
& 0.6592&0.7022&0.7186&0.7343&0.7637&0.7695&0.7757
& 0.7252&0.7667&0.7819&0.7978&0.8257&0.8309&0.8384\\

0.9
& 0.6966&0.736&0.7518&0.7653&0.7933&0.7978&0.8051
& 0.754&0.7888&0.8048&0.8216&0.8477&0.8518&0.8584\\

1.0
& 0.7287&0.7643&0.7786&0.7958&0.8176&0.822&0.829
& 0.7819&0.8157&0.8289&0.8415&0.8648&0.8681&0.8746\\

1.1
& 0.758&0.7864&0.8034&0.8152&0.8389&0.8427&0.8481
& 0.8027&0.8311&0.8479&0.8588&0.8792&0.883&0.8889\\

1.2
& 0.7796&0.8109&0.8208&0.8358&0.8557&0.859&0.865
& 0.8178&0.8491&0.8594&0.8737&0.8923&0.8956&0.9004\\

1.3
& 0.7981&0.8275&0.8395&0.8504&0.8692&0.8735&0.8785
& 0.8349&0.861&0.872&0.8848&0.9019&0.905&0.9101 \\

1.4
&0.8172&0.8444&0.8547&0.8645&0.8829&0.8861&0.8908

&
0.8473&0.8725&0.8846&0.8945&0.9115&0.9145&0.9187
\\
1.5
&
0.8317&0.857&0.8678&0.8777&0.8937&0.8966&0.9011

&
0.8596&0.8834&0.8908&0.9046&0.919&0.9222&0.9261
\\
2.0
&
0.8863&0.9044&0.9105&0.9188&0.9302&0.9327&0.9358

&
0.9019&0.9211&0.9275&0.9356&0.9465&0.9486&0.9514
\\
3.0
&
0.9363&0.9473&0.9526&0.9571&0.9639&0.965&0.9667

&
0.945&0.9558&0.9607&0.965&0.9716&0.9728&0.9744
\\
4.0
&
0.9601&0.9681&0.9698&0.9733&0.9779&0.9785&0.9797
&
0.9647&0.9725&0.9754&0.9777&0.9824&0.9832&0.9842
\\

8.0
&
0.9869&0.9899&0.9911&0.9922&0.9936&0.9939&0.9943
&
0.9887&0.9915&0.9924&0.9934&0.9949&0.9951&0.9955
\\
\bottomrule
\end{tabular}
\label{table 3}
\end{table}
\end{landscape}

\begin{landscape}

\begin{table}[htbp]
\centering
\caption{The upper $u(0.045, q, n, \eta)$ critical values of the $\gamma$ test for the Weibull distribution for $r_0 = 1$ and $q = 0.2, 0.4$.}

\renewcommand{\arraystretch}{1.15}
\small   

\begin{tabular}{c ccccccc ccccccc}
\toprule

\multirow{3}{*}{$\eta$}
& \multicolumn{7}{c}{$q = 0.2$}
& \multicolumn{7}{c}{$q = 0.4$} \\

\cmidrule(lr){2-8}
\cmidrule(lr){9-15}

& \multicolumn{7}{c}{Sample size}
& \multicolumn{7}{c}{Sample size} \\

\cmidrule(lr){2-8}
\cmidrule(lr){9-15}

& 20 & 40 & 60 & 100 & 350 & 500 & 1000
& 20 & 40 & 60 & 100 & 350 & 500 & 1000 \\

\midrule

0.3
& 0.6665&0.6249&0.6068&0.586&0.5564&0.5507&0.5434
&0.8067&0.7689&0.7505&0.7328&0.7035&0.6986&0.6897 \\

0.4
& 0.734&0.6946&0.6784&0.6635&0.6349&0.6295&0.6229
&0.8473&0.8116&0.7994&0.7822&0.7565&0.7517&0.7443\\

0.5
&0.7883&0.7512&0.7376&0.7214&0.6968&0.6926&0.6857
& 0.8805&0.8482&0.8335&0.8202&0.7973&0.7931&0.7864\\

0.6
& 0.8271&0.7961&0.7814&0.7677&0.7462&0.7416&0.7356
&0.8989&0.8739&0.8622&0.8491&0.829&0.8255&0.8195\\

0.7
& 0.8594&0.8275&0.8173&0.8051&0.7848&0.7808&0.7754
& 0.9176&0.8941&0.8833&0.8728&0.8547&0.8514&0.846\\

0.8
& 0.8797&0.8561&0.8452&0.8324&0.8156&0.8118&0.8072
& 0.9316&0.9109&0.9006&0.8913&0.875&0.872&0.8669\\

0.9
& 0.8977&0.8762&0.8672&0.8566&0.8404&0.8375&0.8327
&0.9428&0.9228&0.914&0.9056&0.8909&0.8885&0.8843\\

1.0
&0.9145&0.8931&0.8842&0.876&0.8611&0.8583&0.8541
& 0.9501&0.9333&0.9258&0.9175&0.905&0.9023&0.8983\\

1.1
& 0.924&0.9077&0.8989&0.8912&0.8777&0.875&0.8713
& 0.9571&0.942&0.9346&0.928&0.916&0.9136&0.9103\\

1.2
&0.9354&0.9181&0.9107&0.9039&0.892&0.8895&0.8861
& 0.9627&0.9498&0.9423&0.9358&0.9253&0.9231&0.9201\\

1.3
&0.9429&0.9278&0.9208&0.9148&0.9038&0.9015&0.8983
& 0.9671&0.9553&0.9493&0.9434&0.9332&0.931&0.9282 \\

1.4
&0.9499&0.936&0.9303&0.9237&0.9136&0.9117&0.9087

&
0.9717&0.9599&0.9547&0.9488&0.9397&0.9382&0.9355
\\
1.5
&
0.9556&0.9427&0.9376&0.9313&0.9224&0.9204&0.9178

&
0.9742&0.9641&0.9591&0.9541&0.9456&0.944&0.9416
\\
2.0
&
0.9731&0.965&0.9612&0.9576&0.9507&0.9495&0.9476

&
0.9843&0.9778&0.9745&0.9708&0.9651&0.964&0.9623
\\
3.0
&
0.9872&0.9832&0.9809&0.9789&0.9753&0.9747&0.9735

&
0.9924&0.989&0.9874&0.9853&0.9822&0.9816&0.9807
\\
4.0
&
0.9927&0.9903&0.9888&0.9875&0.9852&0.9848&0.9842
&
0.9957&0.9936&0.9925&0.9913&0.9893&0.989&0.9883
\\

8.0
&
0.9981&0.9974&0.997&0.9966&0.9959&0.9958&0.9956
&
0.9989&0.9983&0.998&0.9976&0.997&0.9969&0.9967
\\
\bottomrule
\end{tabular}
\label{table 4}
\end{table}
\end{landscape}

\subsection{Power study}
To assess the performance of the proposed Rényi divergence-based test statistic, a simulation study was carried out and its power was compared with the Kullback–Leibler (KL) divergence-based test and the Likelihood Ratio Test (LRT) proposed in \cite{economou2014kullback}. Since the KL divergence test and the LRT are well-established procedures for testing biasness in a sample, the comparison provides a useful framework for evaluating the effectiveness of the proposed method.
\begin{enumerate}
    \item  The Kullback–Leibler measure of divergence based test statistic:
    \begin{equation}
        \lambda_1 = \frac{\text{Geometric Mean}~ w_0(x_i^*)}{\text{Arithmetic Mean of}~w_0(x_i^*)}
    \end{equation}
    \item Likelihood Ratio Test:
    \begin{equation}
         \Lambda=\frac{\prod_{i=1}^{n}w(x_i)}{[E(w(X)]^n}
    \end{equation}
    where $
    \hat{\theta}_w$ and $\hat{\theta}$ are the estimation of $\theta$ under the $f_w$ and the $f$ model respectively.
    \end{enumerate}
    The study was conducted under the hypothesis,
    \begin{equation*}
         H_0 : r = 0~~~vs.~~~ H_1: r = r_0
    \end{equation*}
    where $r_0=1,2$ respectively. Random samples were generated from the Weibull distribution with scale parameter $\beta=1$ and shape parameter $\eta= 0.5,1,2, \text{and}~4,$ for sample sizes $n=20,40,60, 
    \text{and}~ 100,$ at significance level $\alpha_c=0.05$.
   \\
   For the proposed Rényi divergence test, 2000 biased samples were generated under the alternative hypothesis and the power was estimated using the simulated lower and upper critical values obtained previously. The same combinations of sample size and parameter values were considered as in the KL divergence framework to ensure direct comparability. For censored samples, censoring proportions of $q=0.2$ and $q=0.4$ were also incorporated.
   \\
    The power values of the Rényi divergence test given in Table \ref{table 5}, \ref{table 6} and \ref{table 7}, were then compared with the corresponding powers of the KL divergence-based test statistic $\lambda_1$ and the LRT reported in the reference study. From the results, we can see that the power $\gamma$ decreases as the ratio $\eta/r$ increases. When $q=0$, in the uncensored case, the LR test performs better. In the censored cases, when compared with Kullback-Leibler divergence test and LR test, the Renyi divergence test presents higher power.

    \begin{table}[ht!]
\centering
\caption{Estimated power of $\gamma$, $\lambda_1$ and LR test for $q=0$}
\renewcommand{\arraystretch}{1.2}
\begin{tabular}{cccccccccc}
\hline
 &  & \multicolumn{3}{c}{\(H_0:r=1\)} &  &  & \multicolumn{3}{c}{\(H_0:r=2\)} \\ 
\cline{3-5} \cline{8-10}

\(n\) & \(\eta\) & \(\gamma\) & \(\lambda_1\) & LR 
& 
\(n\) & \(\eta\) & \(\gamma\) & \(\lambda_1\) & LR \\

\hline

20  & 0.5 &0.8840  & 0.9675 & 0.9565 
& 20  & 1   &0.8885  & 0.9570 & 0.9625 \\

20  & 1.0 &0.6570  & 0.6710 & 0.8905 
& 20  & 2   &0.6605  & 0.6670 & 0.8850 \\

20  & 2.0 &0.2970  & 0.3140 & 0.7725 
& 20  & 4   &0.3240  & 0.3080 & 0.4395 \\

20  & 4.0 &0.1545  & 0.1345 & 0.3150 
& 20  & 8   &0.1390  & 0.1535 & 0.7215 \\

\hline

40  & 0.5 &0.9885  & 1.0000 & 0.9965 
& 40  & 1   &0.9910  & 1.0000 & 0.9975 \\

40  & 1.0 &0.9310  & 0.9560 & 0.9595 
& 40  & 2   &0.9310  & 0.9615 & 0.9610 \\

40  & 2.0 &0.5850  & 0.5975 & 0.8750 
& 40  & 4   & 0.5450 & 0.5965 & 0.6420 \\

40  & 4.0 &0.2475  & 0.2405 & 0.4985 
& 40  & 8   & 0.2515 & 0.2530 & 0.7425 \\

\hline

60  & 0.5 &0.9985  & 1.0000 & 1.0000 
& 60  & 1   & 0.9999 & 1.0000 & 0.9995 \\

60  & 1.0 &0.9910  & 0.9960 & 0.9865 
& 60  & 2   &0.9860  & 0.9945 & 0.9775 \\

60  & 2.0 &0.7440  & 0.7490 & 0.8995 
& 60  & 4   &0.7720  & 0.7510 & 0.7390 \\

60  & 4.0 &0.3405  & 0.3520 & 0.6210 
& 60  & 8   &0.3310  & 0.3455 & 0.7820 \\

\hline

100 & 0.5 &1.0000  & 1.0000 & 1.0000 
& 100 & 1   &1.0000  & 1.0000 & 1.0000 \\

100 & 1.0 &1.0000  & 1.0000 & 0.9960 
& 100 & 2   &1.0000  & 1.0000 & 0.9970 \\

100 & 2.0 &0.9255  & 0.9240 & 0.9490 
& 100 & 4   &0.9340  & 0.9235 & 0.8540 \\

100 & 4.0 &0.4995  & 0.5070 & 0.7350 
& 100 & 8   &0.4610  & 0.4990 & 0.8410 \\

\hline
\end{tabular}
\label{table 5}
\end{table}

\begin{table}[]
\centering
\caption{Estimated power of $\gamma$, $\lambda_1$ and LR test for $q=0.2$}
\renewcommand{\arraystretch}{1.2}
\begin{tabular}{cccccccccc}
\hline
 &  & \multicolumn{3}{c}{\(H_0:r=1\)} &  &  & \multicolumn{3}{c}{\(H_0:r=2\)} \\
\cline{3-5} \cline{8-10}

\(n\) & \(\eta\)& \(\gamma\) & \(\lambda_1\) & LR
&
\(n\) & \(\eta\) & \(\gamma\) & \(\lambda_1\) & LR \\

\hline

20 & 0.5&1.0000 & 0.9845 & 0.9710
& 20 & 1&1.0000 & 0.9865 & 0.9690 \\

20 & 1.0&0.9735 & 0.7210 & 0.6950
& 20 & 2&0.9775 & 0.7195 & 0.6765 \\

20 & 2.0&0.5280 & 0.3150 & 0.3430
& 20 & 4&0.5560 & 0.3165 & 0.3410 \\

20 & 4.0&0.2105 & 0.1590 & 0.1595
& 20 & 8&0.2140 & 0.1555 & 0.1435 \\

\hline

40 & 0.5&1.0000 & 1.0000 & 1.0000
& 40 & 1 &1.0000 & 1.0000 & 1.0000 \\

40 & 1.0&1.0000 & 0.9625 & 0.9375
& 40 & 2&1.0000 & 0.9600 & 0.9595 \\

40 & 2.0&0.8365 & 0.5850 & 0.6245
& 40 & 4&0.8130 & 0.5650 & 0.5945 \\

40 & 4.0&0.3635 & 0.2390 & 0.2765
& 40 & 8&0.3465 & 0.2420 & 0.2645 \\

\hline

60 & 0.5&1.0000 & 1.0000 & 1.0000
& 60 & 1&1.0000 & 1.0000 & 1.0000 \\

60 & 1.0&1.0000 & 0.9955 & 0.9930
& 60 & 2&1.0000 & 0.9955 & 0.9905 \\

60 & 2.0&0.9340 & 0.7375 & 0.7790
& 60 & 4&0.9460 & 0.7400 & 0.7665 \\

60 & 4.0&0.4660 & 0.3490 & 0.4015
& 60 & 8&0.4625 & 0.3380 & 0.3695 \\

\hline

100 & 0.5&1.0000 & 1.0000 & 1.0000
& 100 & 1& 1.0000& 1.0000 & 1.0000 \\

100 & 1.0&1.0000 & 1.0000 & 0.9995
& 100 & 2&1.0000 & 1.0000 & 1.0000 \\

100 & 2.0&0.9955 & 0.9285 & 0.9330
& 100 & 4&0.9955 & 0.9170 & 0.9265 \\

100 & 4.0&0.6560 & 0.4995 & 0.5510
& 100 & 8&0.6560 & 0.4815 & 0.5475 \\

\hline
\end{tabular}
\label{table 6}
\end{table}

\begin{table}[]
\centering
\caption{Estimated power of $\gamma$, $\lambda_1$ and LR test for $q=0.4$}
\renewcommand{\arraystretch}{1.2}
\begin{tabular}{cccccccccc}
\hline
 &  & \multicolumn{3}{c}{\(H_0:r=1\)} &  &  & \multicolumn{3}{c}{\(H_0:r=2\)} \\
\cline{3-5} \cline{8-10}

\(n\) & \(\eta\) & \(\gamma\) & \(\lambda_1\) & LR
&
\(n\) & \(\eta\) & \(\gamma\) & \(\lambda_1\) & LR \\

\hline

20 & 0.5 &1.0000  & 0.9195 & 0.9230
& 20 & 1 &1.0000  & 0.9250 & 0.9330 \\

20 & 1.0 &0.9600  & 0.5360 & 0.6245
& 20 & 2 & 0.971 & 0.5765 & 0.6245 \\

20 & 2.0 &0.5370  & 0.2325 & 0.2915
& 20 & 4 &0.569  & 0.2355 & 0.3030 \\

20 & 4.0 &0.2245  & 0.1305 & 0.1585
& 20 & 8 &0.229  & 0.1385 & 0.1450 \\

\hline

40 & 0.5 &1.0000  & 1.0000 & 0.9995
& 40 & 1 &1.0000  & 0.9985 & 0.9985 \\

40 & 1.0 & 1.0000 & 0.8745 & 0.9140
& 40 & 2 & 1.0000 & 0.8895 & 0.9115 \\

40 & 2.0 &0.8345  & 0.4515 & 0.5325
& 40 & 4 & 0.8175 & 0.4450 & 0.5315 \\

40 & 4.0 &0.3850  & 0.1805 & 0.2445
& 40 & 8 & 0.3630 & 0.1900 & 0.2470 \\

\hline

60 & 0.5 &1.0000 & 1.0000 & 1.0000
& 60 & 1 & 1.0000 & 1.0000 & 1.0000 \\

60 & 1.0 &1.0000  & 0.9710 & 0.9840
& 60 & 2 & 1.0000 & 0.9695 & 0.9730 \\

60 & 2.0 &0.9420  & 0.6230 & 0.6920
& 60 & 4 & 0.9505 & 0.6215 & 0.6865 \\

60 & 4.0 &0.4915  & 0.2615 & 0.3095
& 60 & 8 &0.4840  & 0.2580 & 0.3105 \\

\hline

100 & 0.5 &1.0000  & 1.0000 & 1.0000
& 100 & 1 & 1.0000 & 1.0000 & 1.0000 \\

100 & 1.0 &1.0000  & 0.9995 & 0.9990
& 100 & 2 & 1.0000 & 0.9995 & 1.0000 \\

100 & 2.0 & 0.9965 & 0.8280 & 0.8735
& 100 & 4 & 0.9980 & 0.8265 & 0.8825 \\

100 & 4.0 &0.6835  & 0.4260 & 0.4695
& 100 & 8 & 0.6715 & 0.3980 & 0.4560 \\

\hline
\end{tabular}
\label{table 7}
\end{table}

\section{Real data analysis}
In this section, we consider two real datasets to illustrate the testing procedure of $\gamma$. \\
\textbf{Dataset 1:} We consider the data set presented in Table 6 of \cite{economou2009fitting}, which consists of a length-biased sample of failure times of communication units from aeroplanes. The sample was constructed by the authors from a larger population of confirmed failures using a length-biased sampling scheme, resulting in 69 observations, including 18 right-censored observations at 630 hours. In their study, the authors showed that the Weibull distribution provides an adequate fit to the data and fitted a length-biased Weibull model and obtained estimates $\hat{\beta}= 319.3$ and $\hat\eta= 1.118$.  \par \noindent
We use this dataset to assess whether our proposed test correctly detects the presence of length bias. Applying our test, the observed value of the test statistic is $\gamma=0.935$. By adopting the hypothesis of \eqref{hypothesis2}, the null hypothesis $H_0: r=0$ is rejected since $\gamma$ falls outside the $95\%$ acceptance region $\gamma= 0.935\notin (0.738,0.868)$. This confirms that our test successfully identifies the length-biased nature of the sample.
\\

\textbf{Dataset 2:}
 The data set consists of 50 observations representing the failure times (in weeks) of electronic items, originally reported by \cite{murthy2004weibull}. Before applying the proposed methodology, the suitability of the Weibull distribution for the data was assessed using the Kolmogorov--Smirnov (KS) test and the Anderson--Darling (AD) test. The corresponding p-values were 0.11 and 0.32, respectively, indicating that the Weibull distribution provides an adequate fit to the data. Subsequently, we fitted weibull distribution and obtained estimates of the shape parameter $\hat{\eta}= 0.8$ and the scale parameter $\hat\beta=6.97$, respectively.

To investigate whether the observed data exhibit length bias, the hypothesis \begin{equation*}
         H_0 : r = 0~~~vs.~~~ H_1: r = 1
\end{equation*}

was considered. The proposed test statistic was computed as $\gamma=0.424$. At the 5\% significance level, the acceptance region for $H_0$ is $(0.223,0.532)$. Since $\gamma=0.424 \in (0.223,0.532)$, the null hypothesis is not rejected. Therefore, there is sufficient statistical evidence to conclude that the data is not length biased.

\section{Conclusion}
In this paper, a hypothesis testing procedure based on the Rényi divergence measure has been proposed for detecting 
biased sampling, with particular emphasis on length-biased and size-biased sampling. We derived the empirical Rényi divergence statistic $\gamma$ and established its theoretical properties, including scale invariance and asymptotic normality. 
Through an extensive simulation study, critical values of the proposed test were obtained through Monte Carlo simulation for Weibull distributions under both uncensored and censored framework.

A comprehensive power comparison study against the Kullback-Leibler divergence-based test and the likelihood ratio test of \cite{economou2014kullback} demonstrated that the proposed test performs competitively across most settings and shows particular strength as the sample size and censoring proportion increase. The applicability of the proposed methodology was demonstrated on two real datasets. In the first, a length-biased sample of aircraft communication unit failure times, the test correctly rejected the null hypothesis of no bias, confirming its ability to detect a length-biased sampling scheme known to be present by construction. In the second, a dataset of electronic component failure times, the test failed to reject the null hypothesis, correctly indicating the absence of length bias in that sample. These results support the practical reliability of the proposed test as a diagnostic tool prior to model fitting in survival and reliability analysis.

The proposed approach is computationally simple, does not depend on the scale parameter of the underlying distribution, and can be readily implemented in practice. Although the present work focuses on the Weibull distribution, the methodology can be extended to other parametric families satisfying the required regularity conditions. Future work may consider extensions to progressive censoring schemes, progressively weighted sampling mechanisms, and semiparametric or nonparametric settings, thereby broadening the applicability of the proposed Rényi divergence-based framework.



	
\section*{Conflict of interest statement}\noindent	
	On behalf of all authors, the corresponding author states that there is no conflict of interest.
	
 \bibliographystyle{apalike}
\bibliography{testing}	

@article{fisher1934effect,
  title={The effect of methods of ascertainment upon the estimation of frequencies},
  author={Fisher, Ronald A},
  journal={Annals of Eugenics},
  volume={6},
  number={1},
  pages={13--25},
  year={1934},
  publisher={Wiley Online Library}
}

@article{rao1965discrete,
  title={On discrete distributions arising out of methods of ascertainment},
  author={Rao, C Radhakrishna},
  journal={Sankhy{\=a}: The Indian Journal of Statistics, Series A},
  pages={311--324},
  year={1965},
  publisher={JSTOR}
}

@article{patil1976size,
  title={On size-biased sampling and related form-invariant weighted distributions},
  author={Patil, Ganapati P and Ord, JK},
  journal={Sankhy{\=a}: The Indian Journal of Statistics, Series B},
  pages={48--61},
  year={1976},
  publisher={JSTOR}
}

@article{patil1984studies,
  title={Studies in statistical ecology involving weighted distributions},
  author={Patil, GP},
  journal={Statistics: Applications and New Directions},
  pages={478--503},
  year={1984},
  publisher={Indian Statistical Institute Calcutta}
}

@inproceedings{renyi1961measures,
  title={On measures of entropy and information},
  author={R{\'e}nyi, Alfr{\'e}d},
  booktitle={Proceedings of the fourth Berkeley symposium on mathematical statistics and probability, volume 1: contributions to the theory of statistics},
  volume={4},
  pages={547--562},
  year={1961},
  organization={University of California Press}
}

@article{kullback1951information,
  title={On information and sufficiency},
  author={Kullback, Solomon and Leibler, Richard A},
  journal={The annals of mathematical statistics},
  volume={22},
  number={1},
  pages={79--86},
  year={1951},
  publisher={JSTOR}
}

@article{shannon1948mathematical,
  title={A mathematical theory of communication},
  author={Shannon, Claude E},
  journal={The Bell system technical journal},
  volume={27},
  number={3},
  pages={379--423},
  year={1948},
  publisher={Nokia Bell Labs}
}

@article{sadeghpour2018exponentiality,
  title={Exponentiality test based on Renyi distance between equilibrium distributions},
  author={Sadeghpour, Mehran and Baratpour, S and Habibirad, Arezou},
  journal={Communications in Statistics-Simulation and Computation},
  volume={47},
  number={10},
  pages={2960--2967},
  year={2018},
  publisher={Taylor \& Francis}
}

@article{economou2014kullback,
  title={Kullback--Leibler divergence measure based tests concerning the biasness in a sample},
  author={Economou, Polychronis and Tzavelas, George},
  journal={Statistical Methodology},
  volume={21},
  pages={88--108},
  year={2014},
  publisher={Elsevier}
}

@book{murthy2004weibull,
  title={Weibull models},
  author={Murthy, DN Prabhakar and Xie, Min and Jiang, Renyan},
  year={2004},
  publisher={John Wiley \& Sons}
}

@article{economou2009fitting,
  title={Fitting parametric frailty and mixture models under biased sampling},
  author={Economou, P and Caroni, C},
  journal={Journal of Applied Statistics},
  volume={36},
  number={1},
  pages={53--66},
  year={2009},
  publisher={Taylor \& Francis}
}

@misc{patil2002weighted,
  title={Weighted distributions,Encyclopedia of Environmetrics, Vol. 4},
  author={Patil, GP},
  year={2002},
  publisher={John Wiley \& Sons, Chichester}
}

@article{navarro2003detect,
  title={How to detect biased samples?},
  author={Navarro, J and Ruiz, JM and Del Aguila, Y},
  journal={Biometrical Journal: Journal of Mathematical Methods in Biosciences},
  volume={45},
  number={1},
  pages={91--112},
  year={2003},
  publisher={Wiley Online Library}
}

@article{akman2007simple,
  title={A simple test for detection of length-biased sampling},
  author={Akman, OLCAY and Gamage, Jinadasa and Jannot, Jason and Juliano, STEVEN and Thurman, Andrew and Whitman, Douglas},
  journal={JP Journal of Biostatistics},
  volume={1},
  number={2},
  pages={189--195},
  year={2007}
}
\end{document}